\newcommand{\sol}{\lambda}
\newcommand{\langx}{\mathscr{L}_{x}}
\newcommand{\biset}{\Uppi}
\newcommand{\diset}{\Upsigma}
\newcommand{\notp}{\neg p} 
\newcommand{\test}{?}
\newcommand{\pdlbox}[1]{[#1]}
\newcommand{\pdldia}[1]{\langle #1 \rangle}
\newcommand{\iffi}{\textit{iff} }
\newcommand{\pra}{\alpha}
\newcommand{\prb}{\beta}
\newcommand{\prc}{\gamma}
\newcommand{\choice}{{\mathbin{\scaleobj{.75}{\cup}}}}
\newcommand{\prastar}{\alpha^{*}}
\renewcommand{\phi}{\varphi}
\newcommand{\pdl}{\mathsf{PDL}}
\newcommand{\pdlvdash}{\models}
\newcommand{\imp}{\rightarrow}
\newcommand{\negnnf}[1]{\overline{#1}} 
\newcommand{\propset}{\mathsf{Prop}}
\newcommand{\progset}{\mathsf{Prog}}
\newcommand{\lang}{\mathscr{L}}
\newcommand{\langpdl}{\mathscr{L}_{\pdl}}
\newcommand{\var}{\mathsf{Var}}
\newcommand{\calr}{\mathcal{R}}
\newcommand{\dfn}{Definition} 
\newcommand{\thm}{Theorem} 
\newcommand{\fig}{Figure} 
\newcommand{\ifandonlyif}{iff } 
\begin{document}
\title{On Explicit Solutions to Fixed-Point Equations in Propositional Dynamic Logic} 
\titlerunning{On Explicit Solutions to Fixed-Point Equations in PDL} 
%
\author{Tim S. Lyon\inst{1}\orcidID{0000-0003-3214-0828}}
\authorrunning{T.S. Lyon}
%
\institute{Technische Universit\"at Dresden, N\"othnitzer Str. 46, 01069 Dresden, Germany
\email{timothy\_stephen.lyon@tu-dresden.de}\\
\url{https://sites.google.com/view/timlyon}}
\maketitle              
\begin{abstract}
Propositional dynamic logic ($\pdl$) is an important modal logic used to specify and reason about the behavior of software. A challenging problem in the context of $\pdl$ is solving fixed-point equations, i.e., formulae of the form $x \equiv \phi(x)$ such that $x$ is a propositional variable and $\phi(x)$ is a formula containing $x$. A solution to such an equation is a formula $\psi$ that omits $x$ and satisfies $\psi \equiv \phi(\psi)$, where $\phi(\psi)$ is obtained by replacing all occurrences of $x$ with $\psi$ in $\phi(x)$. In this paper, we identify a novel class of $\pdl$ formulae arranged in two dual hierarchies for which every fixed-point equation $x \equiv \phi(x)$ has a solution. Moreover, we not only prove the existence of solutions for all such equations, but also provide an explicit solution $\psi$ for each fixed-point equation.

\keywords{Equation \and Fixed-Point \and Propositional Dynamic Logic.}
\end{abstract}
%
%
%

\section{Introduction}\label{sec:intro}

Propositional dynamic logic ($\pdl$) was introduced by Fischer and Ladner~\cite{FisLad79} as a modal logic for modeling program execution. A characteristic feature of the logic is the inclusion of modalities of the form $\pdlbox{\pra}$ and $\pdldia{\pra}$ such that $\pra$ is taken to be a computer program in a programming language. A formula of the form $\pdlbox{\pra}\phi$ is read as ‘the proposition $\phi$ holds in all states reachable by executing the program $\pra$' and a formula of the form $\pdldia{\pra}\phi$ is read as ‘the proposition $\phi$ holds in some state reachable by executing the program $\pra$.'

$\pdl$ is widely regarded for its expressivity, decidability, and capacity for modeling dynamic systems. The logic's high degree of expressive power is due to the inclusion of fixed-point constructs, viz., the modalities $\pdldia{\prastar}$ and $\pdlbox{\prastar}$, which permit the specification of (co)inductive definitions. Such operators are essential in program specification and verification for describing behaviors over potentially infinite structures such as safety properties (e.g., ‘no bad state will be reached’), liveness properties (e.g., ‘a good state will eventually be reached’), and termination (e.g., ‘every execution eventually ends’); cf.~\cite{BraSti07}.


In the context of $\pdl$, finding explicit solutions for \emph{fixed-point equations} has proven to be a challenging problem. A fixed-point equation is a formula of the form $x \equiv \phi(x)$ such that $x$ is a propositional variable and $\phi(x)$ is a formula containing $x$. A \emph{solution} to such an equation is a formula $\psi$ not containing $x$ and satisfying $\psi \equiv \phi(\psi)$, where $\phi(\psi)$ is obtained by replacing every occurrence of $x$ in $\phi$ with $\psi$. Aside from being a problem of intrinsic theoretical interest, solving fixed-point equations has relevance to the longstanding open problem of if $\pdl$ enjoys Craig interpolation~\cite{BorGat20}.\footnote{In fact, the observations made in this paper grew out of the author's investigation of the Craig interpolation property for $\pdl$.} As a case in point, a proof-theoretic methodology relying on cyclic sequent systems has been used in recent years to establish interpolation for certain modal fixed-point logics~\cite{AfsLei22,Sha14}. A critical aspect of this methodology is the solution of fixed-point equations in the construction of interpolants. Naturally, this demonstrates the value of solving such equations in the context of computing interpolants and establishing interpolation properties. 
Moreover, this also has practical relevance as 
Craig interpolation can be used to speed-up model-checking algorithms~\cite{McM18}.


To best of the author's knowledge, there appears to be little to no work on solving fixed-point equations for $\pdl$. Therefore, the aim of the current article is to take a first step toward rectifying this gap in the literature. This paper makes two contributions: first, we identify a non-trivial class of $\pdl$ formulae (arranged in two dual hierarchies of increasing formulaic complexity) such that for every formula $\phi(x)$ in the class, $x \equiv \phi(x)$ has a solution. Second, we not only prove the existence of solutions for all such equations, but provide an explicit witness~$\psi$ as a solution for each equation. We note that although it is straightforward to \emph{verify} the solutions to such equations (see \thm~\ref{thm:fixed-point-solution}), the novelty of the present work consists of having identified the class of solvable equations in the first place.\\




\section{Preliminaries: Propositional Dynamic Logic}\label{sec:prelims}

We let $\propset := \{p,q,r,\ldots\}$ be a countable set of \emph{propositional atoms}, $\var := \{x, y, z, \ldots\}$ be a countable set of \emph{variables}, and $\progset := \{a, b, c, \ldots\}$ be a countable set of \emph{atomic programs}. We define the language $\lang$ to be the set of formulae generated by the grammar in BNF shown below left and let a \emph{program} be an expression generated from the grammar shown below right:
$$
\phi ::= x \ | \ p \ | \ \notp \ | \ \phi \lor \phi \ | \ \phi \land \phi \ | \ \pdldia{\pra} \phi \ | \ \pdlbox{\pra} \phi
\qquad
\pra ::= a \ | \ \phi \test \ | \ \pra ; \pra \ | \ \pra \choice \pra \ | \ \pra^{*}
$$
where $x$ ranges over $\var$, $p$ ranges over $\propset$, and $a$ ranges over $\progset$. We use $\phi$, $\psi$, $\chi$, etc. to denote formulae in $\lang$ and $\pra$, $\prb$, $\prc$, etc. to denote programs. 

Although propositional dynamic logic does not traditionally include variables in formulae, we have opted to include them as they are helpful in formulating fixed-point equations. We let $\var(\phi)$ and $\var(\pra)$ denote the set of variables occurring in~$\phi$ and $\pra$, respectively. We let $\langpdl := \{\phi \in \lang \ | \ \var(\phi) = \emptyset\}$ be the traditional language of propositional dynamic logic and let $\langx := \{\phi \in \lang \ | \ x \not \in \var(\phi)\}$ be the set of all $x$-free formulae in $\lang$. We remark that variables can be considered positive propositional atoms and will be given the same interpretation (see \dfn~\ref{def:semantics-pdl} below). If a formula $\phi$ contains an occurrence of the variable $x$, then we may indicate this by writing $\phi$ as $\phi(x)$.

We define the negation $\negnnf{\phi}$ of a formula $\phi$ as follows: for $p \in \propset$, $\negnnf{p} := \notp$ and $\negnnf{\notp} := p$, and for $x \in \var$, $\negnnf{x} := x$. Furthermore, $\negnnf{\phi \lor \psi} := \negnnf{\phi} \land \negnnf{\psi}$, $\negnnf{\phi \land \psi} := \negnnf{\phi} \lor \negnnf{\psi}$, $\negnnf{\pdldia{\pra} \phi} := \pdlbox{\pra} \negnnf{\phi}$, and $\negnnf{\pdlbox{\pra} \phi} := \pdldia{\pra} \negnnf{\phi}$. Observe that the negation operation has no effect on a variable; defining negation in this way will be helpful in the next section. Last, we define $\phi \imp \psi := \negnnf{\phi} \lor \psi$, $\phi \equiv \psi := (\psi \imp \phi) \land (\psi \imp \phi)$, and for a fixed $p \in \propset$, we let $\top := p \lor \notp$ and $\bot := p \land \notp$.

\begin{definition}[Semantics]\label{def:semantics-pdl} A \emph{model} is a tuple $M = (W,\calr,V)$ such that
(1) $W$ is a non-empty set of worlds $w$, $u$, $v$, etc., (2) $\calr := \{R_{\pra} \subseteq W \times W \ | \ \pra \in \progset\}$ is a family of binary relations over $W$, each indexed with an atomic program, and (3) $V \colon (\propset \cup \var) \to 2^{W}$ is a \emph{valuation function}. We define relations for complex programs below, where $\circ$ stands for the usual composition operation on relations and $(R_{\pra})^{*}$ is the reflexive-transitive closure of $R_{\pra}$.
\begin{itemize}

\item $R_{\phi \test} := \{(w,w) \ | \ M, w \models \phi \}$;

\item $R_{\pra;\prb} := R_{\pra} \! \circ R_{\prb}$;

\item $R_{\pra \choice \prb} := R_{\pra} \! \cup R_{\prb}$;

\item $R_{\pra^{*}} := (R_{\pra})^{*}$.

\end{itemize}
\noindent
We recursively define the \emph{satisfaction} of a formula $\phi$ in a model $M$ at world $w$, written $M,w \pdlvdash \phi$, as follows:
\begin{itemize}

\item $M,w \pdlvdash A$ \ifandonlyif $w \in V(A)$, for $A \in \propset \cup \var$;

\item $M,w \pdlvdash \notp$ \ifandonlyif $w \not\in V(p)$;

\item $M,w \pdlvdash \phi \lor \psi$ \ifandonlyif $M,w \pdlvdash \phi$ or $M,w \pdlvdash \psi$;

\item $M,w \pdlvdash \phi \land \psi$ \ifandonlyif $M,w \pdlvdash \phi$ and $M,w \pdlvdash \psi$;

\item $M,w \pdlvdash \pdldia{\pra} \phi$ \ifandonlyif $\exists u \in R_{\pra}(w)$, $M,u \pdlvdash \phi$;

\item $M,w \pdlvdash \pdlbox{\pra} \phi$ \ifandonlyif $\forall u \in R_{\pra}(w)$, $M,u \pdlvdash \phi$.


\end{itemize}
We write $ \pdlvdash \phi$ and say that $\phi$ is \emph{valid} \iffi for every model $M = (W,\calr,V)$ and all $w \in W$, $M, w \pdlvdash \phi$. 
We define \emph{propositional dynamic logic} ($\pdl$) to be the set $\pdl \subseteq \langpdl$ of all valid formulae in $\langpdl$.
\end{definition}

It is well known that the logic $\pdl$ admits a finite axiomatization~\cite{KozPar81,Seg77}. We will not review the axioms of $\pdl$ here, however, we list a set of useful equivalences in \fig~\ref{fig:equivs} that are valid in $\pdl$ and which are used in the subsequent section to solve $\pdl$ fixed-point equations. Last, we note that the interested reader can consult Fischer and Ladner~\cite{FisLad79} for an introduction to $\pdl$. 

\begin{figure}[H]
\centering

\begin{multicols}{2}
\begin{description}

\item[E1] $\pdlbox{\pra} \pdlbox{\prb} \phi \equiv \pdlbox{\pra ; \prb} \phi$

\item[E2] $\phi \land \psi \equiv \pdldia{\phi\test}\psi$

\item[E3] $\pdlbox{\pra}(\phi \land \psi) \equiv \pdlbox{\pra} \phi \land \pdlbox{\pra} \psi$

\item[E4] $\pdlbox{\prastar} \phi \equiv \phi \land \pdlbox{\pra} \pdlbox{\prastar} \phi$

\item[E5] $\pdlbox{\top \test} \phi \equiv \phi$

\item[E6] $\pdldia{\pra} \pdldia{\prb} \phi \equiv \pdldia{\pra ; \prb} \phi$

\item[E7] $\phi \lor \psi \equiv \pdlbox{\negnnf{\phi}\test}\psi$

\item[E8] $\pdldia{\pra}(\phi \lor \psi) \equiv \pdldia{\pra} \phi \lor \pdldia{\pra} \psi$

\item[E9] $\pdldia{\prastar} \phi \equiv \phi \lor \pdldia{\pra} \pdldia{\prastar} \phi$

\item[E10] $\pdldia{\bot \test} \phi \equiv \phi$

\end{description}
\end{multicols}

\caption{Useful $\pdl$ equivalences.\label{fig:equivs}}
\end{figure}

\section{Solving PDL Fixed-Point Equations}\label{sec:solutions}

A \emph{fixed-point equation} is defined to be a formula of the form $x \equiv \phi(x)$ such that $x \in \var$ and $\phi(x) \in \lang$. Given formulae $\psi, \phi(x) \in \lang$, we let $\phi(\psi)$ denote the formula obtained by replacing $\psi$ for each occurrence of $x$ in $\phi(x)$. We define a \emph{solution} to a fixed-point equation $x \equiv \phi(x)$ to be a formula $\psi \in \langx$ such that $\psi \equiv \phi(\psi)$ is valid. In this section, we show that if $\phi(x)$ is in one of two forms, dubbed \emph{$\biset^{x}$-form} and \emph{$\diset^{x}$-form}, then $x \equiv \phi(x)$ has a solution. 


Let us fix a variable $x \in \var$. We define a hierarchy of formulae in $\lang$ of increasing complexity relative to $x$. The base level of the hierarchy is the set~$\biset_{0}^{x}$. Above the base level, odd levels $\biset_{2n+1}^{x}$ and even levels $\biset_{2n+2}^{x}$ are defined as shown below with $n \in \mathbb{N}$. The set $\biset^{x}$ is obtained by taking the union of all~levels. 
$$
\biset_{0}^{x} := \{\phi \lor (\psi \land x) \ | \ \phi, \psi \in \langx \}
\quad
\biset_{2n+1}^{x} := \{\pdlbox{\pra} \phi \ | \ \phi \in \biset_{2n}^{x} \ and \ x \not\in \var(\pra) \}
$$
$$
\biset_{2n+2}^{x} := \{\phi \lor (\psi \land \chi) \ | \ \chi \in \biset_{2n+1}^{x} \ and \ \phi, \psi \in \langx \}
\quad
\biset^{x} := \bigcup_{n \in \mathbb{N}} \biset_{n}^{x}
$$
We also define a dual hierarchy obtained by negating the formulae occurring in all sets defined above. In particular, we let $\diset_{n}^{x} := \{\negnnf{\phi} \ | \ \phi \in \biset_{n}^{x}\}$ for $n \in \mathbb{N}$ and $\diset^{x} := \{\negnnf{\phi} \ | \ \phi \in \biset^{x}\}$. We say that a formula $\phi(x) \in \lang$ is in \emph{$\biset^{x}$-form} or \emph{$\diset^{x}$-form} \iffi $\phi(x) \in \biset^{x}$ or $\phi(x) \in \diset^{x}$, respectively.

Let us discuss the shape of formulae in $\biset^{x}$-form and $\diset^{x}$-form. Each formula $\phi(x) \in \biset^{x}_{2n+1}$ and $\psi(x) \in \diset^{x}_{2n+1}$ is of form (\ref{eq:odd-form-1}) and (\ref{eq:odd-form-3}), respectively, which are shown below. We have also re-written forms (\ref{eq:odd-form-1}) and (\ref{eq:odd-form-3}) into their respectively equivalent forms (\ref{eq:odd-form-2}) and (\ref{eq:odd-form-4}) as these forms may be easier for the reader to parse and are helpful in the proof of \thm~\ref{thm:fixed-point-solution} below. Form (\ref{eq:odd-form-2}) may be obtained from (\ref{eq:odd-form-1}) by using the equivalences E1, E2, and E7 whereas (\ref{eq:odd-form-4}) may be obtained from (\ref{eq:odd-form-3}) by using the equivalences E2, E6, and E7. (NB. See \fig~\ref{fig:equivs} for the list of $\pdl$ equivalences E1--E10.) Note that if $\phi(x) \in \biset^{x}_{2n}$ or $\psi(x) \in \diset^{x}_{2n}$, then $\phi(x)$ and $\psi(x)$ have almost the same shape as formulae (\ref{eq:odd-form-1}) and (\ref{eq:odd-form-3}), but without the initial $\pdlbox{\pra_{1}}$ and $\pdldia{\pra_{1}}$ modalities, respectively. Therefore, $\phi(x) \in \biset^{x}_{2n}$ and $\psi(x) \in \diset^{x}_{2n}$ are respectively equivalent to formulae of forms (\ref{eq:odd-form-2}) and (\ref{eq:odd-form-4}), but where $\pdlbox{\pra_{1}; \negnnf{\phi}_{1} \test}$ is replaced by $\pdlbox{\negnnf{\phi}_{1} \test}$ and $\pdldia{\pra_{1}; \phi_{1} \test}$ is replaced by $\pdldia{\phi_{1} \test}$, respectively.
\begin{eqnarray}
\phi(x) & = & \pdlbox{\pra_{1}} \Big( \phi_{1} \lor \Big( \psi_{1} \land \pdlbox{\pra_{2}}\big( \phi_{2} \lor \big( \psi_{2} \land \ldots \pdlbox{\pra_{n}} (\phi_{n} \lor (\psi_{n} \land x)) \ldots \big)\big)\Big)\Big)\label{eq:odd-form-1}\\
& \equiv & \pdlbox{\pra_{1}; \negnnf{\phi}_{1} \test} \pdldia{\psi_{1} \test} \pdlbox{\pra_{2}; \negnnf{\phi}_{2} \test} \pdldia{\psi_{2} \test} \cdots \pdlbox{\pra_{n}; \negnnf{\phi}_{n} \test} \pdldia{\psi_{n} \test}x\label{eq:odd-form-2}\\
\psi(x) & = & \pdldia{\pra_{1}} \Big( \phi_{1} \land \Big( \psi_{1} \lor \pdldia{\pra_{2}}\big( \phi_{2} \land \big( \psi_{2} \lor \ldots \pdldia{\pra_{n}} (\phi_{n} \land (\psi_{n} \lor x)) \ldots \big)\big)\Big)\Big)\label{eq:odd-form-3}\\
& \equiv & \pdldia{\pra_{1}; \phi_{1} \test} \pdlbox{\negnnf{\psi}_{1} \test} \pdldia{\pra_{2}; \phi_{2} \test} \pdlbox{\negnnf{\psi}_{2} \test} \cdots \pdldia{\pra_{n}; \phi_{n} \test} \pdlbox{\negnnf{\psi}_{n} \test}x\label{eq:odd-form-4}
\end{eqnarray}

We now explicitly list the solutions to fixed-point equations depending on which level of a hierarchy they are associated with. To make the presentation of our solutions more compact, we introduce the following notation for a sequence of compositions of programs: $\bigodot_{i=1}^{n} \pra_{i} := \pra_{1} ; \pra_{2} ; \ldots ; \pra_{n}$. The solutions to fixed-point equations are as follows: for $n \in \mathbb{N}$, (1) if $\phi(x) \in \biset^{x}_{2n}$, then $\sol_{1}$ is a solution to $x \equiv \phi(x)$, (2) if $\phi(x) \in \biset^{x}_{2n+1}$, then $\sol_{2}$ is a solution to $x \equiv \phi(x)$, (3) if $\phi(x) \in \diset^{x}_{2n}$, then $\sol_{3}$ is a solution to $x \equiv \phi(x)$, and (4) if $\phi(x) \in \diset^{x}_{2n+1}$, then $\sol_{4}$ is a solution to $x \equiv \phi(x)$.
$$
\underbrace{\pdlbox{\big(\negnnf{\phi}_{1}\test ; \bigodot_{i=2}^{n} \pra_{i} ; \negnnf{\phi}_{i}\test \big)^{*}} \bigwedge_{j=1}^{n} \pdlbox{\negnnf{\phi}_{1}\test ; \bigodot_{k=2}^{j} \pra_{k} ; \negnnf{\phi}_{k}\test} \psi_{j}}_{= \ \sol_{1}}
\quad
\underbrace{\pdlbox{\big(\bigodot_{i=1}^{n} \pra_{i} ; \negnnf{\phi}_{i}\test \big)^{*}} \bigwedge_{j=1}^{n} \pdlbox{ \bigodot_{k=1}^{j} \pra_{k} ; \negnnf{\phi}_{k}\test} \psi_{j}}_{= \ \sol_{2}}
$$
$$
\underbrace{\pdldia{\big(\negnnf{\phi}_{1}\test ; \bigodot_{i=2}^{n} \pra_{i} ; \negnnf{\phi}_{i}\test \big)^{*}} \bigvee_{j=1}^{n} \pdldia{\negnnf{\phi}_{1}\test ; \bigodot_{k=2}^{j} \pra_{k} ; \negnnf{\phi}_{k}\test} \psi_{j}}_{= \ \sol_{3}}
\
\underbrace{\pdldia{\big(\bigodot_{i=1}^{n} \pra_{i} ; \negnnf{\phi}_{i}\test \big)^{*}} \bigvee_{j=1}^{n} \pdldia{ \bigodot_{k=1}^{j} \pra_{k} ; \negnnf{\phi}_{k}\test} \psi_{j}}_{= \ \sol_{4}}
$$
We now argue that the above formulae $\sol_{i}$ are indeed solutions to their respective fixed-point equations.

\begin{theorem}\label{thm:fixed-point-solution}
If $\phi(x) \in \biset^{x} \cup \diset^{x}$, then $x \equiv \phi(x)$ has a solution $\sol$.
\end{theorem}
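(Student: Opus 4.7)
My plan is to verify, by direct equational rewriting, that each of the claimed formulae $\sol_i$ satisfies $\sol_i \equiv \phi(\sol_i)$, using only the equivalences E1--E10 of \fig~\ref{fig:equivs}. I will carry out the argument in detail for the representative case $\phi(x) \in \biset^x_{2n+1}$ with solution $\sol_2$; the three remaining cases follow by entirely symmetric reasoning (with dual equivalences for the $\diset^x$ hierarchy, and with the initial modality layer absent for the even levels).

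\textbf{Setup.} Starting from the equivalent normal form (\ref{eq:odd-form-2}) of $\phi(x)$, I abbreviate $\gamma := \bigodot_{i=1}^{n} \pra_i ; \negnnf{\phi}_i \test$ and $\Phi := \bigwedge_{j=1}^{n} \pdlbox{\bigodot_{k=1}^{j} \pra_k ; \negnnf{\phi}_k \test}\psi_j$, so that $\sol_2 = \pdlbox{\gamma^*}\Phi$. I then substitute $\sol_2$ for $x$ in (\ref{eq:odd-form-2}) and aim to establish $\phi(\sol_2) \equiv \sol_2$.

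\textbf{Peeling.} The heart of the argument is an inner induction that strips the nested $\pdlbox{\pra_i;\negnnf{\phi}_i\test}\pdldia{\psi_i\test}$ layers from the inside out. At each layer, E2 converts $\pdldia{\psi_i\test}\chi$ into $\psi_i \land \chi$; E3 then distributes the enclosing box over the resulting conjunction; and E1 absorbs the successive boxes into a single $\bigodot$-composed program. Each iteration contributes exactly one conjunct of the form $\pdlbox{\bigodot_{k=1}^{j} \pra_k;\negnnf{\phi}_k\test}\psi_j$ to the eventual outer conjunction, while the propagated copy of $\sol_2$ ends up behind a progressively longer composite box. After all $n$ layers have been processed, one obtains
\[
\phi(\sol_2) \;\equiv\; \Phi \,\land\, \pdlbox{\gamma}\sol_2.
\]
Since by definition $\pdlbox{\gamma}\sol_2 = \pdlbox{\gamma}\pdlbox{\gamma^*}\Phi$, a single application of E4 yields $\Phi \land \pdlbox{\gamma}\pdlbox{\gamma^*}\Phi \equiv \pdlbox{\gamma^*}\Phi = \sol_2$, closing this case.

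\textbf{Remaining cases and main obstacle.} The case $\phi(x) \in \biset^x_{2n}$ differs only in the absence of the outer $\pdlbox{\pra_1}$, which is reflected by dropping $\pra_1$ from the compositions in $\sol_1$; the peeling and collapse arguments are otherwise identical. The dual cases $\phi(x) \in \diset^x_{2n+1}$ and $\phi(x) \in \diset^x_{2n}$, with solutions $\sol_4$ and $\sol_3$, run through symmetrically: boxes and diamonds swap roles, conjunctions become disjunctions, and one invokes E6, E8, and the dual unfolding E9 in place of E1, E3, and E4. The principal obstacle I anticipate is the bookkeeping in the inner peeling induction: one must check that at each step the composed-program prefix attached to $\psi_j$ is exactly $\bigodot_{k=1}^{j} \pra_k;\negnnf{\phi}_k\test$, and that the sequential boxes collapse via E1 in the precise way required to recover $\Phi$ and the trailing $\pdlbox{\gamma}\sol_2$. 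The manipulation is mechanical but indexing-heavy; once it is set up carefully, the final collapse via E4 (respectively E9) is a single equational step.
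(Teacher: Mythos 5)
Your proposal is correct and is essentially the paper's own proof: the paper establishes the chain $\sol_2 \equiv \Phi \land \pdlbox{\gamma}\sol_2 \equiv \cdots \equiv \phi(\sol_2)$ by first unfolding the star with E4 and then iterating E1, E3, E2, whereas you traverse the same chain of equivalences in the opposite direction (peeling $\phi(\sol_2)$ down to $\Phi \land \pdlbox{\gamma}\sol_2$ and then folding with E4), which is an immaterial difference. The treatment of the remaining three cases by duality and by dropping the initial modality likewise matches the paper.
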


\begin{proof} We prove that $\sol_{2}$ is a solution to $x \equiv \phi(x)$ for $\phi(x) \in \biset^{x}_{2n+1}$. The proofs that $\sol_{1}$, $\sol_{3}$, and $\sol_{4}$ are solutions to $x \equiv \phi(x)$ for $\phi(x) \in \biset^{x}_{2n}$, $\phi(x) \in \diset^{x}_{2n}$, and $\phi(x) \in \diset^{x}_{2n+1}$, respectively, are proven similarly.

Since $\phi(x) \in \biset^{x}_{2n+1}$, it has the shape of formula (\ref{eq:odd-form-1}) discussed above. Observe that (\ref{pf1:step2}) can be directly obtained from (\ref{pf1:step1}) by applying E4. To obtain (\ref{pf1:step3}) from (\ref{pf1:step2}), we use the equivalences E1 and E3 in the following way: first, we apply E1 to `decompose' the modalities $\pdlbox{ \bigodot_{k=1}^{j} \pra_{k} ; \negnnf{\phi}_{k}\test}$ for $2 \leq j \leq n$ and $\pdlbox{\bigodot_{i=1}^{n} \pra_{i} ; \negnnf{\phi}_{i}\test }$ in (\ref{pf1:step2}) to obtain $\pdlbox{ \pra_{1} ; \negnnf{\phi}_{1}\test} \pdlbox{ \bigodot_{k=2}^{j} \pra_{k} ; \negnnf{\phi}_{k}\test}$ for $2 \leq j \leq n$ and $\pdlbox{\pra_{1} ; \negnnf{\phi}_{1}\test } \pdlbox{\bigodot_{i=2}^{n} \pra_{i} ; \negnnf{\phi}_{i}\test }$, respectively. Second, we apply E3 to distribute the modalities $\pdlbox{ \pra_{1} ; \negnnf{\phi}_{1}\test}$ occurring in front of each conjunct out in front of the entire conjunction, giving formula (\ref{pf1:step3}). To obtain (\ref{pf1:step4}) from (\ref{pf1:step3}), we use the equivalence E2. Observe that $\chi$ has a similar shape to (\ref{pf1:step2}), but the range of the parameters $j$ and $i$ has decreased from $[1,n]$ to $[2,n]$. We now repeat the process of applying E1, E3, and then E2, in a similar manner as described above until we obtain the formula (\ref{pf1:step4.5}). Above, we discussed that a formula of shape (\ref{eq:odd-form-2}) is equivalent to a formula of shape (\ref{eq:odd-form-1}), and therefore, formula (\ref{pf1:step4.5}) is equivalent to $\phi(\sol_{2})$. 
\begin{eqnarray}
\sol_{2}  & = & \pdlbox{\big(\bigodot_{i=1}^{n} \pra_{i} ; \negnnf{\phi}_{i}\test \big)^{*}} \bigwedge_{j=1}^{n} \pdlbox{ \bigodot_{k=1}^{j} \pra_{k} ; \negnnf{\phi}_{k}\test} \psi_{j}\label{pf1:step1}\\
 & \equiv & \bigwedge_{j=1}^{n} \pdlbox{ \bigodot_{k=1}^{j} \pra_{k} ; \negnnf{\phi}_{k}\test} \psi_{j} \land \pdlbox{\bigodot_{i=1}^{n} \pra_{i} ; \negnnf{\phi}_{i}\test } \sol_{2}\label{pf1:step2}\\
 & \equiv & \pdlbox{\pra_{1} ; \negnnf{\phi}_{1}\test} \big(\psi_{1} \land \bigwedge_{j=2}^{n} \pdlbox{ \bigodot_{k=2}^{j} \pra_{k} ; \negnnf{\phi}_{k}\test} \psi_{j} \land \pdlbox{\big(\bigodot_{i=2}^{n} \pra_{i} ; \negnnf{\phi}_{i}\test \big)} \sol_{2} \big)\label{pf1:step3}\\
 & \equiv & \pdlbox{\pra_{1} ; \negnnf{\phi}_{1}\test} \pdldia{\psi_{1}\test} \big(\underbrace{\bigwedge_{j=2}^{n} \pdlbox{ \bigodot_{k=2}^{j} \pra_{k} ; \negnnf{\phi}_{k}\test} \psi_{j} \land \pdlbox{\big(\bigodot_{i=2}^{n} \pra_{i} ; \negnnf{\phi}_{i}\test \big)} \sol_{2}}_{\chi}\big)\label{pf1:step4}
  \end{eqnarray}
 \begin{eqnarray}
 & \equiv & \pdlbox{\pra_{1}; \negnnf{\phi}_{1} \test} \pdldia{\psi_{1} \test} \pdlbox{\pra_{2}; \negnnf{\phi}_{2} \test} \pdldia{\psi_{2} \test} \cdots \pdlbox{\pra_{n}; \negnnf{\phi}_{n} \test} \pdldia{\psi_{n} \test} \sol_{2}\label{pf1:step4.5}
\end{eqnarray}
This concludes the proof.\qed
\end{proof}

\begin{example} We give an example showing how to solve the particular fixed-point equation $x \equiv \phi(x)$ with $\phi(x) = p \land \pdlbox{a}(q \lor (r \land x))$. First, observe that $\phi(x) \equiv \bot \lor (p \land \pdlbox{a}(q \lor (r \land x)))$, meaning, $\bot \lor \phi(x) \in \biset_{2}^{x}$ and $x \equiv \phi(x)$ has a solution of the form $\sol_{1}$ since $\bot \lor \phi(x)$ is of the form $\phi_{1} \lor ( \psi_{1} \land \pdlbox{\pra_{2}}( \phi_{2} \lor (\psi_{2} \land x))$ with $n = 2$, $\phi_{1} = \bot$, $\psi_{1} = p$, $\pra_{2} = a$, $\phi_{2} = q$, and $\psi_{2} = r$.

\begin{align*}
\sol_{1} & =  \pdlbox{(\top \test ; a ; \negnnf{q}\test)^{*}} (\pdlbox{ \top \test} p \land \pdlbox{\top \test ; a ; \negnnf{q} \test} r)\\
& \equiv  (\pdlbox{ \top \test} p \land \pdlbox{\top \test ; a ; \negnnf{q} \test} r) \land \pdlbox{\top \test ; a ; \negnnf{q}\test} \sol_{1}\tag{by E4}\\
& \equiv  \pdlbox{ \top \test} (p \land \pdlbox{a ; \negnnf{q} \test} r \land \pdlbox{a ; \negnnf{q}\test} \sol_{1})\tag{by E1, E3}\\
& \equiv  p \land \pdlbox{a}\pdlbox{\negnnf{q} \test} r \land \pdlbox{a}\pdlbox{\negnnf{q}\test} \sol_{1}\tag{by E1, E5}\\
& \equiv  p \land \pdlbox{a}\pdlbox{\negnnf{q} \test} (r \land \sol_{1})\tag{by E3}\\
& \equiv  p \land \pdlbox{a} (q \lor (r \land \sol_{1}))\tag{by E7}\\
\end{align*}
The above proof demonstrates that $\sol_{1}$ is indeed a solution to $x \equiv \phi(x)$. We note that the above proof does not make use of E2 as in the proof of \thm~\ref{thm:fixed-point-solution} and additionally makes use of E5 to remove the redundant $\pdlbox{\top \test}$ modality. 
\end{example}

\section{Concluding Remarks}\label{sec:conclusion}

In this paper, we identified a non-trivial class of solvable $\pdl$ fixed-point equations obtained by taking the union $\biset^{x} \cup \diset^{x}$ of two dual hierarchies of increasing formulaic complexity. We provided explicit solutions to each fixed-point equation dependent upon which level of which hierarchy the equation was associated with. Although it is straightforward to verify that the formulae $\sol_{1}$--$\sol_{4}$ are solutions to their respective fixed-point equations, the novelty of the present work consisted in having identified the class $\biset^{x} \cup \diset^{x}$ and these solutions in the first place.

In future work, we aim to investigate larger classes of solvable $\pdl$ fixed-point equations. For example, observe that for any formula $\phi(x) \in \biset^{x} \cup \diset^{x}$ of form (\ref{eq:odd-form-1}) or (\ref{eq:odd-form-3}), the variable $x$ does not occur anywhere in its context, i.e., in any $\phi_{i}$, $\psi_{i}$, or $\pra_{i}$. Therefore, it would be interesting to look into solutions to equations where this restriction is lifted. As discussed in the introduction, we also aim to use solutions to fixed-point equations, accompanied by proof-theoretic techniques~\cite{AfsLei22,Sha14}, to investigate Craig interpolation for $\pdl$ and to extrapolate these techniques to other modal fixed-point logics.


\begin{credits}
\subsubsection{\ackname} 
Work supported by the European Research Council (ERC) Consolidator Grant 771779 
 (DeciGUT).

\subsubsection{\discintname}
The author has no competing interests to declare that are relevant to the content of this article.
\end{credits}
%
%
%
 \bibliographystyle{splncs04}
 \bibliography{bibliography.bib}
\end{document}